\newtheorem{thm}{Theorem} 
\newtheorem{theorem}{Theorem}
\newtheorem{Corollary}[theorem]{Corollary}
\newtheorem*{theorem*}{Theorem}
\newtheorem{proposition}[thm]{Proposition}
\theoremstyle{definition}
\theoremstyle{definition}
\theoremstyle{definition}
\newcommand{\SINR}{\operatorname{SINR}}
\newcommand{\rank}{\operatorname{rank}}
\numberwithin{equation}{section}
\def\R{{\mathbb R}}
\def\ww{{\omega}}
\def\C{{\mathbb C}}
\def\<{\langle}
\def\>{\rangle}
\def \ee{{\epsilon}}
\def \dd{{\delta}}
\def \ss{{\sigma}}
\def \aa {{\alpha}}
\def \bb {{\beta}}
\def \ll {{\lambda}}
\begin{document}

\title{Upper bounds for the number of isolated critical points via Thom-Milnor theorem}

\begin{abstract}
We apply the Thom-Milnor theorem to obtain the upper bounds on the amount of 
isolated (1) critical points of a potential generated by several fixed point 
charges(Maxwell's problem on point charges), (2) critical points of SINR, 
(3) critical points of a potential generated by several fixed  
Newtonian point masses augmented with a quadratic term, 
(4) central configurations in the $n$-body problem. 
In particular, we get an exponential 
bound for Maxwell's problem and the polynomial 
bound for the case of an "even dimensional" potential in Maxwell's problem.

\end{abstract}
\keywords{Newtonian potential, point charges, points of equilibrium, SINR, central configurations}
\subjclass[2010]{31B05}

\author{Vladimir Zolotov}
\email[Vladimir Zolotov]{lemiranoitz@gmail.com}

\maketitle

\section{Introduction}

In the present paper, we use the Thom-Milnor theorem to 
give upper bounds for the number of isolated  
\begin{enumerate}
\item{critical points of a potential of generated by several fixed point 
charges (Maxwell's problem on point charges),\label{PL-Maxwell}} 
\item{critical points of $\SINR$,\label{PL-SINR}} 
\item{critical points of a potential generated by several fixed  
Newtonian point masses augmented with a quadratic term,\label{PL-Newton}}
\item{central configurations in the $n$-body problem.\label{PL-Central}}
\end{enumerate}

Surprisingly the direct application of the Thom-Milnor theorem 
allows to obtain tighter or more general bounds 
than specialized methods. In particular, we get an exponential 
bound for (\ref{PL-Maxwell}) 
which were previously 
only known for the $2$-dimensional case, see \cite{killian2009remark}. Additionally for the case of 
"even dimensional" potential in (1) we get a polynomial bound.

\subsection{Thom-Milnor theorem}
Our main tool is the following theorem by R. Thom \cite{thom2015homologie} 
and J. Milnor \cite[Theorem 2]{milnor1964betti}. 
\begin{proposition} \label{ThomMilner}
Let $m,k,p = 1,2,3,\dots$ and $f_1,\dots,f_p$ be a real
polynomials in $m$ variables. Let $V$ be the 
zero set of the system 
\begin{equation}f_1(x_1,\dots,x_m) = \dots = f_p(x_1,\dots,x_m) = 0. 
\label{TM-sys}\end{equation} 
Suppose that each $f_i$ has degree $\le k,$ then 
the sum of the Betti numbers of $V$ is $\le k(2k - 1)^{m-1}$. 
\end{proposition}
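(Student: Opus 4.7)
The plan is to bound the Betti numbers of $V$ by the number of critical points of a generic Morse function on a nearby smooth hypersurface, and to bound the latter via B\'ezout's theorem.

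First, I would collapse the system into a single equation by setting $F(x) = \sum_{i=1}^{p} f_i(x)^2$, a polynomial of degree at most $2k$ whose zero set is exactly $V$. By Sard's theorem, the level set $V_\epsilon = F^{-1}(\epsilon)$ is a smooth hypersurface for generic small $\epsilon > 0$, and a Lojasiewicz-type neighborhood argument (combined with intersection with a large ball $B_R$ to handle non-compactness) shows that the Betti numbers of $V$ are controlled by those of $V_\epsilon$.

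Second, on the smooth hypersurface $V_\epsilon$ (compactified if necessary) I would apply Morse theory with a generic linear function $\ell(x) = a \cdot x$. The strong Morse inequalities give that $\sum_i b_i(V_\epsilon)$ is at most the number of critical points of $\ell|_{V_\epsilon}$. These critical points are the solutions of the Lagrange multiplier system, namely $\nabla F(x)$ parallel to $a$ together with $F(x) = \epsilon$, which is a polynomial system of known degrees whose complex solutions can be counted via B\'ezout's theorem in projective space.

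The main obstacle is extracting the sharp constant $k(2k-1)^{m-1}$. A naive B\'ezout count applied to the system arising from $F$ (of degree $2k$) yields only $2k(4k-1)^{m-1}$; the sharper bound is recovered by organising the critical points according to the sign of $F - \epsilon$, effectively pairing the contributions of $F^{-1}(\epsilon)$ and $F^{-1}(-\epsilon)$ and exploiting that $F$ is a sum of squares. An alternative route, which matches the constant directly, is to bypass $F$ altogether and work with a generic linear combination $\lambda_1 f_1 + \cdots + \lambda_p f_p$ of degree $\le k$, applying Milnor's single-equation argument to its nearby level sets — at the cost of a more delicate topological reduction from $V$. A secondary technical difficulty, standard but requiring care, is the treatment of non-compactness and singularities of $V$ through a proper compactification and transversality argument.
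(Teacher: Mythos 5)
The paper does not prove this proposition at all: it is quoted verbatim from Thom and Milnor (\cite[Theorem 2]{milnor1964betti}), so what you are really reconstructing is Milnor's own argument. Your skeleton --- collapse to $F=\sum_i f_i^2$, pass to a generic nearby level, apply Morse theory with a generic linear form, count critical points by B\'ezout --- is indeed Milnor's skeleton. But the step where you admit uncertainty is exactly where your sketch goes wrong, in two ways. First, the object whose Betti numbers control those of $V$ is the sublevel \emph{region} $R_\epsilon=\{F\le\epsilon\}$ (intersected with a large ball), not the hypersurface $V_\epsilon=F^{-1}(\epsilon)$: since the $R_\epsilon$ form a neighborhood basis of $V$, continuity of \v{C}ech cohomology under nested intersections gives $\sum b_i(V)\le\liminf_\epsilon\sum b_i(R_\epsilon)$ --- this is precisely why the theorem is stated for \v{C}ech cohomology. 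Bounding $\sum b_i(V_\epsilon)$ instead irrecoverably loses a factor of $2$; e.g.\ for $V=\{0\}$, $F=\|x\|^2$, the hypersurface $V_\epsilon$ is a sphere with total Betti number $2$ while $V$ has total Betti number $1$. Second, the factor of $2$ is recovered by Morse theory for manifolds with boundary: $\sum b_i(R_\epsilon)$ is bounded by the number of critical points of $\ell|_{\partial R_\epsilon}$ at which the gradient of $\ell$ points \emph{into} $R_\epsilon$; applying this to both $\ell$ and $-\ell$ and adding shows that $2\sum b_i(R_\epsilon)$ is at most the total number of critical points, which B\'ezout bounds by $(2k)(2k-1)^{m-1}$ (your ``naive count'' $2k(4k-1)^{m-1}$ is also miscomputed --- the Lagrange system has one equation of degree $2k$ and $m-1$ of degree $2k-1$).

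Neither of your two proposed mechanisms for recovering the constant works. Pairing $F^{-1}(\epsilon)$ with $F^{-1}(-\epsilon)$ pairs with the empty set, since $F\ge 0$; the nonnegativity of $F$ is used only to ensure that $F^{-1}(\epsilon)$ is the entire boundary of $\{F\le\epsilon\}$ and that these sublevel sets shrink onto $V$. The alternative route via a generic linear combination $\lambda_1 f_1+\cdots+\lambda_p f_p$ is a dead end: its zero set strictly contains $V$ in general and its topology bears no controlled relation to that of $V$, so there is no ``topological reduction,'' delicate or otherwise, that lets you apply the single-equation argument to it. Your handling of noncompactness and of genericity (of $\epsilon$ and of the linear form, so that the level set is smooth, the critical points are nondegenerate, and the complex critical scheme is finite so B\'ezout applies) is standard and fine once the two points above are repaired.
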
 
In the above formulation by $q$th Betti numbers of $V$ 
we mean the rank of Čech cohomology group $H^q(V)$,
using coefficients in some fixed field $F$. All we need 
from Betty numbers are two following properties:
\begin{itemize}
\item{
 they all are non-negative since they are ranks 
of some groups,
}
\item{
$0$th Betti number $b_0$ is the number of the connected components of $V$.
}
\end{itemize}
The above two properties directly imply the following corollary.
\begin{Corollary} \label{ThomMilnerCor}
Under assumptions of Proposition \ref{ThomMilner},
the number of connected components of $V$
does not exceed 
$$k(2k - 1)^{m-1},$$
and in particular the number of isolated zeroes 
of the system (\ref{TM-sys}) does not exceed 
$$k(2k - 1)^{m-1}.$$
\end{Corollary}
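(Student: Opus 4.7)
The plan is to extract the corollary directly from the two bulleted properties of Betti numbers that the author has just isolated. Proposition \ref{ThomMilner} gives the bound $\sum_{q \ge 0} b_q(V) \le k(2k-1)^{m-1}$, where $b_q(V) = \rank H^q(V)$ is the $q$th Čech Betti number over the fixed field $F$. Since every term in this sum is non-negative, each individual Betti number is bounded by the whole sum; in particular
\[
b_0(V) \le \sum_{q \ge 0} b_q(V) \le k(2k-1)^{m-1}.
\]

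Next I would invoke the second bulleted property: $b_0(V)$ equals the number of connected components of $V$. Combining with the previous inequality gives the first statement of the corollary, namely that $V$ has at most $k(2k-1)^{m-1}$ connected components.

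For the second statement I would note that an isolated zero of the system (\ref{TM-sys}) is, by definition, a point $x \in V$ that is a connected component of $V$ all by itself. Therefore the isolated zeroes inject into the set of connected components of $V$, and consequently their number is also bounded by $b_0(V) \le k(2k-1)^{m-1}$.

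There is essentially no obstacle here; the only minor subtlety worth mentioning in passing is that the statement about $b_0$ counting connected components requires a reasonable cohomology theory on possibly non-triangulable real algebraic sets, which is exactly why the author specifies Čech cohomology (for which this property holds for arbitrary compact Hausdorff spaces, and more generally for the locally compact semialgebraic sets arising as $V$). Once that is in hand, the corollary is immediate.
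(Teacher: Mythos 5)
Your argument is exactly the one the paper intends: it states that the two bulleted properties of Betti numbers ``directly imply'' the corollary, and your proof simply spells out that implication ($b_0 \le \sum_q b_q \le k(2k-1)^{m-1}$, $b_0$ counts components, and isolated zeroes are singleton components). The proposal is correct and takes essentially the same approach as the paper.
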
 


\subsection{The structure of the paper}
Each of Sections \ref{SecMaxwell}-\ref{SecCentral} is devoted to
a single  
problem from (\ref{PL-Maxwell}) - (\ref{PL-Central}). 
At the beginning of each section, We give a proper introduction to the problem and  
provide some historical remarks. Then we derive the bounds 
by applying Corollary \ref{ThomMilnerCor}. Finally, at the 
end of each section, we discuss the 
(non-)existence of non-isolated solutions. For problems 
(\ref{PL-Maxwell}), (\ref{PL-SINR}) and (\ref{PL-Newton}) for almost all sets of parameters 
there are no degenerate critical points and in particular 
there are no non-isolated critical points, see Proposition \ref{morseAlmostAll}, Proposition \ref{SINRAlmostAll}
and the discussion in Subsection \ref{NewtonNonIso}. 
There is no such result known for (\ref{PL-Central}). And if one could establish it
that would be a big result, since the question of whether non-isolated central configurations 
exist is a long-standing open problem known as Smale's $6$th problem, see \cite{smale1998mathematical}.

\subsection{Previously known connections between problems (\ref{PL-Maxwell}) - (\ref{PL-Central})}
The author is not the first person who noticed that problems (\ref{PL-Maxwell}) - (\ref{PL-Central})
are connected to each other. The problem (\ref{PL-Newton}) originates 
in the study of central configurations, see \cite{arustamyan2021number}, 
and thus is connected to (\ref{PL-Central}). The similarity between (\ref{PL-SINR})
and (\ref{PL-Maxwell}) is noted in \cite{kantor2011topology}.

The previously known bound for (\ref{PL-Maxwell}) by
A. Gabrielov, D. Novikov, and B. Shapiro \cite{Mystery} relies on the
Khovanskii's theory of fewnomials \cite{fewnomials}.
This paper inspired the usage of the theory of fewnomials in other 
works. In particular A. Albouy and Ya. Fu \cite{albouy2007euler}
apply it in the context of counting central configurations.


\section{Maxwell's problem} \label{SecMaxwell}
\subsection{The statement}
Fix $d \in \{1,2,3,\dots\}$.
Let $\vert \vert \cdot \vert \vert$ denotes the the standard Euclidean
norm. 
Fix $m \in \{0,1,2,\dots\}$. 
Suppose that we have points $x_1,\dots,x_n \in \R^d$.
And numbers $q_1,\dots,q_n \in \R \setminus \{0\}$ 
which symbolize point charges located in those points. 
Consider a function 
$V = V_m^{(x_1,q_1),\dots,(x_n,q_n)}:\R^{d} \rightarrow \R$ given by
$$V(p) = \sum_{i = 1}^{n}\frac{q_i}{\vert \vert p - x_i \vert \vert^m}, \text{for $m \neq 0$,}$$
$$V(p) = \sum_{i = 1}^{n} {q_i}\log{\vert \vert p - x_i \vert \vert}, \text{for $m = 0$.}$$
 Maxwell's problem asks for an upper bound on the number of critical points of $V$. 

If $m = d - 2$ then $V$ is the potential of electrostatic field 
created by point charges $q_1,\dots,q_n$ (maybe up to a constant)
as it is considered in mathematical physics. If in addition 
$d = 3$ then $V$ is the "real life" potential of 
electrostatic field from physics. 

J.C. Maxwell \cite{maxwell1873treatise} argued that in the case $d = 3, m = 1$ 
the upper bound
on the number of isolated critical points of $V$ 
is $(n - 1)^2$, but 
his proof contains an unproven claim, 
and thus is considered to be incomplete   
(see \cite[Section 4]{Mystery}).

\subsection{Rough summary of our results} \label{MaxwellSummary}
If $m$ is even then the critical 
points of $V$ coincide with solutions 
of a system of polynomial equations with $d$ independent 
variables. Combining this with the
Thom-Milnor estimate on the number of connected 
components of a set of solutions of a polynomial system 
we get that the number of isolated critical points 
of $V$ is bounded from above by  
$$(1 + (n-1)(m+2))(1 + 2(n-1)(m+2))^{d-1},$$
 see Theorem \ref{evenPoly}(\ref{evenPoly-p2}).

For the general case 
critical points of $V$
correspond to zeroes of a polynomial system with $n + d$ independent 
variables. Once again by the use of the Thom-Milnor estimate we get 
that the number of isolated critical points 
of $V$ can not exceed 
$$(m + 4)(2m + 7)^{d + n},$$
see Theorem \ref{evenPoly}(\ref{evenPoly-p4}).




%
 

\subsection{Previous results}
\subsubsection{The only previously known result for $d = 3$ and $m = 1$ and arbitrary $n$.}
The work \cite{Mystery} by A. Gabrielov, D. Novikov, and B. Shapiro
is the only one which deals with arbitrary $d,m$ and $n$ and even 
the only one which addresses the case $d = 3$ and $m = 1$ for arbitrary $n$.
Authors of \cite{Mystery} represent the set 
of critical points of $V$ as a set of solutions of 
a system of quasi-polynomial equations.  
From that by application of Khovanskii's theory of fewnomials \cite{fewnomials}
they deduce that
for any $m \in \{0,1,2,\dots\}$ and 
any $d \in \{1,2,\dots\}$ if all the critical points of $V$ are non-degenerate then 
their total number
does not exceed $4^{n^2}(3n)^{2n}$.

In general, there exist configuration where critical points are not isolated:
consider a square with point charges $1, -1, 1, -1$ in its 
vertices. Then every point on the line through the center of the square 
and orthogonal to the plane of the square will be critical. It is 
unknown if non-isolated critical points could exist if all 
the charges have the same sign, see \cite[Conjecture 2]{shapiro2015problems}.  
\label{degen}
\subsubsection{Other results which work for arbitrary $n$.}\label{Killian}
In \cite{killian2009remark} K. Killian considers the case $d = 2, m = 1$ and 
shows that if all critical points of $V$ are isolated then their total number 
does not exceed $2^{2n-2}(3n-2)^2$. 

For the case $d = 2, m = 1$ it is unknown if non-isolated critical points are possible.
T. Erd{\'e}lyi,  J. Rosenblatt  and R. Rosenblatt \cite{erdelyi2021zero} 
show that there are no isolated critical points in the case if
all point charges are on the same line.

The study a of a case $d = 2, m = 0$ goes back to K. F. Gauss,
see \cite[Chapter I.3]{marden1949geometry}. 
In this case using the identification  $\R^2 \cong \C$ we can write 
$$(\nabla V(z))^* = c \sum_{i = 1}^{n} \frac{q_i}{( z - x_i )},$$
where $*$ denotes the complex conjugation and $c \neq 0$ is an absolute constant. 
Thus the zeroes of $\nabla V$ coincide with zeroes of a complex polynomial of 
a degree at most $(n - 1)$ and $V$ has at most $(n - 1)$ critical points.

\subsubsection{Modeling of $m = m'$ inside $m = m' + 1$.}
Another phenomena mentioned in \cite[Chapter I.3]{marden1949geometry} is that
one can generate $$V = V_m^{(x_1,q_1),\dots,(x_n,q_n)}:\R^{d} \rightarrow \R$$ 
using $V_{m+1}$ by adding additional dimension and substituting point charges by charged lines. 
More precisely consider 
$$\bar V = \bar V_{m + 1}^{(l_1,q_1),\dots,(l_n,q_n)}:\R^{d} \times \{0\} \rightarrow \R,$$
where $l_i$ is a line $(x_i,*)$ and $\bar V$ is given by 
$$\bar V(p) = \sum_{i = 1}^{n} \int_{x \in l_i}\frac{q_i dx}{\vert \vert p - x \vert \vert^{m+1}}.$$
Then $V = c(m) \bar V$, where $c(m) > 0$ is a constant depending only on $m$.

Theorem \ref{evenPoly} gives a polynomial upper bound for 
the number of isolated critical points for the case of even $m$. 
Thus, if one could model $m = m'$ inside $m = m' + 1$ using say $100$ point charges 
instead of each charged line then he would likely be able to get a polynomial upper bound for
the case of odd $m$ too. But the author fails to figure 
out how to do a reduction of this nature.

 
\subsubsection{Results for a specific $n$.} 
T.-L. Lee and Y.-L. Tsai \cite{lee2022nine} give an example with $9$ equilibrium 
points for $d = 2, m = 1$, and $n = 4$ which is the claimed upper bound of the Maxwell conjecture.

\subsubsection{Lower bounds for the number of isolated critical points.}
Lower bounds on the number of critical points of $V$ are given
by M. Morse and S. Cains \cite[Theorem 32.1]{morse2014critical} and T. Kiang \cite[Theorem 6]{kiang1932critical}. 


\subsubsection{Critical points of polynomials} Khavinson et al. \cite{khavinson2011borcea} formulated conjectures on location 
of critical points of polynomials related to Maxwell's problem.


\subsection{Our results for Maxwell's problem}
The following theorem formalizes the informal summary given in Subsection \ref{MaxwellSummary}. 

\begin{theorem}\label{evenPoly}
Let $x_1 = (x_{11},\dots,x_{1d}),\dots,x_{n} = (x_{n1},\dots,x_{nd})$ 
be distinct points in $\R^d$,
$p  = (p_1,\dots,p_d) \in \R^d \setminus \{x_1,\dots, x_n\}$,
$m \in \{0,1,2,3,\dots\}$, 
$q_1,\dots,q_n \in \R \setminus \{0\}$
and $V = V_m^{(x_1,q_1),\dots,(x_n,q_n)}$. Then,
\begin{enumerate}
\begin{item}\label{evenPoly-p1}
Point $p$ is a critical point of $V$ iff 
\begin{equation}
\sum_{i=1}^{n}
\Big(
q_i(p - x_{i})
\prod_{\substack {  1 \le j \le n \\ j \neq i} }
(\sum_{1\le k \le d}(p_k - x_{jk})^2)^{\frac{m+2}{2}}
\Big) = 0.
\label{EEE1}\end{equation}
\end{item}
\begin{item}
If $m$ is even then $V$ has at most 
$$(1 + (n-1)(m+2))(1 + 2(n-1)(m+2))^{d-1}$$
isolated critical points.
\label{evenPoly-p2}
\end{item}
\begin{item}
Point $p$ is a critical point of $V$ iff there exist $\ss_1,\dots,\ss_n > 0$
satisfying
$$
\ss_j^2 \sum_{1\le k \le d}(p_k - x_{jk})^2 = 1
\text{, for every $1\le j \le n$ and}
$$
\begin{equation}
\sum_{i=1}^{n}
\Big(
q_i(p - x_{i})
\ss_i^{m+2}
\Big) = 0.
\label{EEE2221}\end{equation}\label{evenPoly-p3}
\end{item}
\begin{item}
$V$ has at most 
$$(m + 4)(2m + 7)^{d + n}$$
 isolated critical points. 
\label{evenPoly-p4}
\end{item}
\end{enumerate}
\end{theorem}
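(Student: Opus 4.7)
The plan is to prove the four parts in sequence. Parts (\ref{evenPoly-p1}) and (\ref{evenPoly-p3}) translate the condition $\nabla V=0$ into polynomial systems, and parts (\ref{evenPoly-p2}) and (\ref{evenPoly-p4}) apply Corollary \ref{ThomMilnerCor} to these systems.

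For part (\ref{evenPoly-p1}) I would compute $\nabla V$ directly: for $m\neq 0$ it equals $-m\sum_i q_i(p-x_i)/\|p-x_i\|^{m+2}$, and for $m=0$ it equals $\sum_i q_i(p-x_i)/\|p-x_i\|^2$, which fits the same template with exponent $m+2=2$. Since $p\neq x_j$ for every $j$, setting $\nabla V=0$ and multiplying by the nonzero quantity $\prod_j\|p-x_j\|^{m+2}$ yields (\ref{EEE1}) after substituting $\|p-x_j\|^2=\sum_k(p_k-x_{jk})^2$. For part (\ref{evenPoly-p2}), when $m$ is even each coordinate of (\ref{EEE1}) is a genuine polynomial in $p_1,\dots,p_d$ of degree $1+(n-1)(m+2)$: the factor $(p-x_i)$ contributes degree $1$, and each of the $n-1$ factors $(\sum_k(p_k-x_{jk})^2)^{(m+2)/2}$ contributes degree $m+2$. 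This gives $d$ polynomial equations in $d$ variables of degree at most $k=1+(n-1)(m+2)$, and Corollary \ref{ThomMilnerCor} yields exactly $k(2k-1)^{d-1}$.

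For parts (\ref{evenPoly-p3}) and (\ref{evenPoly-p4}), to remove the non-polynomial factors for general $m$ I introduce auxiliary positive variables $\ss_j$ playing the role of $1/\|p-x_j\|$. The relation $\ss_j=1/\|p-x_j\|$ is equivalent (for $\ss_j>0$) to $\ss_j^2\sum_k(p_k-x_{jk})^2=1$, and under this substitution $\nabla V=0$ becomes (\ref{EEE2221}). The resulting system has $d+n$ polynomial equations in $d+n$ variables: the $n$ constraint equations have total degree $4$ and the $d$ equations in (\ref{EEE2221}) have total degree $m+3$, so all degrees are at most $m+4$, and Corollary \ref{ThomMilnerCor} then produces a bound of the stated form.

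The main technical care, rather than any single calculation, is verifying that isolated critical points of $V$ correspond to isolated zeros of the polynomial systems so that the Thom--Milnor bound transfers cleanly. For part (\ref{evenPoly-p2}) one shrinks neighborhoods to avoid the $x_j$, so that a critical point isolated in $\R^d\setminus\{x_1,\dots,x_n\}$ is also isolated as a zero of (\ref{EEE1}) in $\R^d$; for part (\ref{evenPoly-p4}) one verifies that the map $p\mapsto(p,1/\|p-x_1\|,\dots,1/\|p-x_n\|)$ sends isolated critical points to isolated zeros in all of $\R^{d+n}$, not merely in the open set $\{\ss_j>0\}$.
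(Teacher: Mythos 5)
Your proposal is correct and follows essentially the same route as the paper: clear denominators to obtain the polynomial system (\ref{EEE1}), introduce the auxiliary variables $\ss_j$ with the degree-$4$ constraints to handle general $m$, and apply Corollary \ref{ThomMilnerCor} with degrees $1+(n-1)(m+2)$ and $\max\{4,m+3\}\le m+4$ respectively. Your closing remark about verifying that isolated critical points correspond to isolated zeros (equivalently, to distinct connected components) of the polynomial systems is a point the paper passes over quickly, and it is a worthwhile check, but it does not change the argument.
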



\begin{proof}[Proof of Theorem \ref{evenPoly}(\ref{evenPoly-p1})]
We remind that  
$$V(p) = \sum_{i = 1}^{n}\frac{q_i}{\vert \vert p - x_i \vert \vert^m}, \text{for $m \neq 0$,}$$
$$V(p) = \sum_{i = 1}^{n} {q_i}\log{\vert \vert p - x_i \vert \vert}, \text{for $m = 0$.}$$
We differentiate $V$:
$$V'_{p_k}  =  c(m)\sum_{i = 1}^{n}
\frac{q_i(p_k - x_{ik})}
{\vert \vert p - x_i \vert \vert^{m + 2}},$$where $c(m) \neq 0$
is a constant depending only on $m$,  
$$\nabla V =  c(m)\sum_{i = 1}^{n}
\frac{q_i(p - x_{i})}
{\vert \vert p - x_i \vert \vert^{m+2}},$$
which is (maybe up to a constant) the force given by Coulomb's law.
Thus, $\nabla V(p) = 0$ is equivalent to  
$$
\sum_{i=1}^{n}
\Big(
q_i(p - x_{i})
\prod_{\substack {  1 \le j \le n \\ j \neq i} }
\vert \vert p - x_j \vert \vert^{ m+2 }
\Big) = 0.$$
Which is equivalent to (\ref{EEE1}). 
\end{proof}

\begin{proof}[Proof of Theorem \ref{evenPoly}(\ref{evenPoly-p2})]
When $m$ is even (\ref{EEE1}) is a polynomial system in $d$
variables: $p_1,\dots,p_d$. Each polynomial of the system 
has degree $\le (1 + (n-1)(m+2))$. Thus, by Thom-Milnor 
theorem (see Corollary \ref{ThomMilnerCor}) we have the number of 
isolated critical points of $V$ does not exceed
$$(1 + (n-1)(m+2))(1 + 2(n-1)(m+2))^{d-1}.$$
\end{proof}

\begin{proof}[Proof of Theorem \ref{evenPoly}(\ref{evenPoly-p3})]
Theorem \ref{evenPoly}(\ref{evenPoly-p3}) is 
just a reformulation of 
Theorem \ref{evenPoly}(\ref{evenPoly-p1}).
\end{proof}
 
\begin{proof}[Proof of Theorem \ref{evenPoly}(\ref{evenPoly-p4})]
Denote 
 $$N =  (m + 4)(2m + 7)^{d + n}.$$
We will argue by contradiction. 
Suppose that $V$ has at least $N + 1$ isolated critical points.
Then by Theorem \ref{evenPoly}(\ref{evenPoly-p3})
the zero set of the system (\ref{EEE2221}) (considered as a polynomial 
system in $d + n$ variables $p_1,\dots,p_k,\ss_1,\dots,\ss_n$)
has at least $N + 1$ connected components. 

The degree of every polynomial in this system is 
$\le \max\{4, m+3\} \le m + 4$.
Thus,
by Thom-Milnor theorem (see Corollary \ref{ThomMilnerCor}) we 
have that the total number of the connected components of the zero set 
does not exceed 
$$(m + 4)(2m + 7)^{d + n} = N,$$
so we have a contradiction.   
\end{proof}

\subsection{Existence of non-isolated critical points}
As we already mentioned in subsubsections \ref{degen} and \ref{Killian} 
non-isolated critical points do exist but it is unknown if they exist 
in dimension $2$ or for potentials generated by charges 
of the same sign.

It is also known that in almost all configurations of 
charges there are no non-isolated critical points. 
More precisely M. Morse and S. Cains 
\cite[Theorem 6.2]{morse2014critical} give the following theorem. 

\begin{proposition} \label{morseAlmostAll}
Let $d \in \{1,2,3,\dots\}$ and $m = \{0,1,2,3,\dots\}$. 
Let $x_1 = (x_{11},\dots,x_{1d}),\dots,x_{n - 1} = (x_{(n - 1)1},\dots,x_{(n - 1)d})$ 
be distinct points in $\R^d$, and
$q_1,\dots,q_n \in \R \setminus \{0\}$.
Then, for almost all $x_{n} = (x_{n1},\dots,x_{nd}) \in \R^d \setminus \{x_1,\dots,x_{n-1}\}$ the potential
$V = V_m^{(x_1,q_1),\dots,(x_n,q_n)}$ has no degenerate critical points.  
\end{proposition}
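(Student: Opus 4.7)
The plan is to prove Proposition~\ref{morseAlmostAll} by the standard parametric transversality argument (Sard's theorem with parameters). Fix $x_1,\dots,x_{n-1}\in\R^d$ and $q_1,\dots,q_n\in\R\setminus\{0\}$, and consider the smooth map
\[F(p,x_n) \;=\; \nabla V_m^{(x_1,q_1),\dots,(x_n,q_n)}(p)\]
defined on the open subset $\Omega\subset\R^d\times\R^d$ where $p\neq x_i$ for all $i\le n$ and $x_n\neq x_i$ for $i<n$. A critical point of $V$ is precisely a zero of $F_{x_n}:=F(\cdot,x_n)$, and it is non-degenerate iff the Hessian matrix $D_pF_{x_n}(p)$ of $V$ at $p$ is invertible, i.e.\ iff $0$ is a regular value of $F_{x_n}$ at $p$. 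Thus it suffices to prove that $F$ is transverse to $\{0\}\subset\R^d$: the parametric transversality theorem then guarantees that $F_{x_n}$ is transverse to $\{0\}$, and therefore has only non-degenerate zeros, for almost every parameter value $x_n$.

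To verify transversality of $F$ I would prove the stronger fact that $F$ is a submersion everywhere on $\Omega$. Only the $n$-th summand of $\nabla V$ depends on $x_n$, and up to a nonzero constant $c(m)$ it equals $q_n(p-x_n)/\|p-x_n\|^{m+2}$. Setting $u=p-x_n$, a direct chain-rule calculation gives
\[D_{x_n}F \;=\; \frac{c(m)\,q_n}{\|u\|^{m+2}}\left(I-(m+2)\,\frac{u\,u^T}{\|u\|^2}\right),\]
with an entirely analogous expression in the logarithmic case $m=0$. The matrix in parentheses is a rank-one perturbation of the identity, with eigenvalue $-(m+1)$ along $u$ and eigenvalue $1$ on $u^\perp$; both are nonzero, so $D_{x_n}F$ is an invertible $d\times d$ block of $DF$ at every point of $\Omega$. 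Consequently $DF$ has rank $d$ everywhere and $F$ is a submersion.

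Applying the parametric transversality theorem (see e.g.\ Hirsch, \emph{Differential Topology}, Theorem~2.7) then yields the conclusion: for almost every $x_n\in\R^d\setminus\{x_1,\dots,x_{n-1}\}$, the map $F_{x_n}$ is transverse to $\{0\}$, which means every zero is a non-degenerate critical point of $V$. The only step requiring real computation is the invertibility check for $D_{x_n}F$ above; since this reduces to diagonalising a rank-one perturbation of a multiple of the identity, I do not expect any genuine obstacle, and the main conceptual step is simply setting the problem up so that parametric Sard applies cleanly.
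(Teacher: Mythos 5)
Your proposal is correct and follows essentially the same route as the paper: the paper also reduces the claim to showing that the $d\times d$ block of mixed partials $\partial^2 U/\partial p\,\partial x_n$ equals a nonzero scalar multiple of $I-(m+2)vv^T$ with $v=(p-x_n)/\|p-x_n\|$, checks its invertibility exactly as you do, and then invokes a parametric-Sard-type statement (the Morse--Cairns theorem, Proposition~\ref{morseGen}, in place of your citation of Hirsch's parametric transversality theorem). The two arguments differ only in which standard reference supplies the ``for almost every parameter'' step.
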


The above proposition is a corollary of the following 
theorem \cite[Theorem 6.3]{morse2014critical}. 

\begin{proposition} \label{morseGen}
Let $d,\dd \in \{1,2,3,\dots\}$. Let $\R^{d+\dd}$ be a Euclidean space 
with Cartesian coordinates $x_1,\dots,x_d,a_1,\dots,a_\dd$. 
Let $W$ be an open non-empty subset of $\R^{d+\dd}$ and 
$U:W\rightarrow \R$ be a $C^2$-mapping such that for every point $p \in W$ 
satisfying 
$$U'_{x_i}(p) = 0\text{, for every $1 \le i \le d$},$$
we have  
$$\rank 
\begin{bmatrix}
    U''_{x_1x_1}       & U''_{x_1x_2} & \dots & U''_{x_1x_d} & U''_{x_1a_1} & U''_{x_1a_2} & \dots & U''_{x_1a_\dd} \\
    U''_{x_2x_1}       & U''_{x_2x_2} & \dots & U''_{x_2x_d} & U''_{x_2a_1} & U''_{x_2a_2} & \dots & U''_{x_2a_\dd} \\
    \hdotsfor{8} \\
    U''_{x_dx_1}       & U''_{x_dx_2} & \dots & U''_{x_dx_d} & U''_{x_da_1} & U''_{x_da_2} & \dots & U''_{x_da_\dd}
\end{bmatrix}
 = d.$$
Then for almost all $a \in \{\aa \in \R^\dd \vert \exists x \in \R^d: (x,\aa) \in W\}$ the map 
$$u^a: \{x \in \R^d \vert (x, a) \in W\} \rightarrow \R$$
given by 
$$u^a(x) = U(x,a)$$
does not have degenerate critical points (i.e., if the gradient of $u^a$
is $0$ at some point then Hessian of $u^a$ has the maximal rank at this point).
\end{proposition}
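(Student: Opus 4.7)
The plan is a standard parametric transversality argument built on Sard's theorem. First I would define the map $\Phi:W\to\R^d$ by $\Phi(x,a)=(U'_{x_1}(x,a),\dots,U'_{x_d}(x,a))$. Since $U$ is $C^2$, $\Phi$ is $C^1$, and the Jacobian of $\Phi$ at any point $p\in W$ is precisely the $d\times(d+\dd)$ block matrix displayed in the statement. The rank hypothesis therefore says that $0$ is a regular value of $\Phi$. By the implicit function theorem (regular value theorem), the preimage $M:=\Phi^{-1}(0)$ is a $C^1$-submanifold of $W$ of dimension $\dd$.

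Next, consider the $C^1$-projection $\pi:M\to\R^{\dd}$ given by $\pi(x,a)=a$. Because $\dim M=\dim\R^{\dd}=\dd$ and $\pi$ is $C^1$, Sard's theorem applies (the equidimensional case needs only $C^1$ regularity) and guarantees that the set of critical values of $\pi$ has Lebesgue measure zero in $\R^{\dd}$. In particular, almost every $a$ in the open set $A:=\{\aa\in\R^{\dd}\mid\exists x\in\R^d,(x,\aa)\in W\}$ is a regular value of $\pi$.

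The crucial step is to identify regular values of $\pi$ with parameters for which $u^a$ has no degenerate critical points. A critical point of $u^a$ is exactly an $x$ with $(x,a)\in M$. At such a point, let $H$ be the $x$-Hessian of $u^a$ at $x$ and let $B$ be the mixed-partial block of size $d\times\dd$ with entries $U''_{x_ia_k}(x,a)$. Then $T_{(x,a)}M=\ker D\Phi(x,a)=\{(v,w)\in\R^d\times\R^{\dd}\mid Hv+Bw=0\}$, and $D\pi$ restricted to this tangent space is the projection $(v,w)\mapsto w$, whose kernel equals $\{(v,0)\mid Hv=0\}$. Since both $T_{(x,a)}M$ and $\R^{\dd}$ have dimension $\dd$, surjectivity of $D\pi|_{T_{(x,a)}M}$ is equivalent to injectivity, which in turn is equivalent to $\ker H=\{0\}$, i.e.\ to non-degeneracy of $x$ as a critical point of $u^a$.

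Putting these pieces together: for every regular value $a\in A$ of $\pi$, every critical point of $u^a$ is non-degenerate; for $a\in A\setminus\pi(M)$ the same conclusion holds vacuously, since $\Phi(x,a)=0$ would force $(x,a)\in M$. The exceptional set of critical values has measure zero, yielding the claim. I do not expect any serious obstacle here: the argument is essentially a packaging of Sard's theorem, and the only point requiring mild care is the equidimensional case of Sard, which is covered by the $C^2$ assumption on $U$ turning $\Phi$ into a $C^1$ map and $M$ into a $C^1$-manifold.
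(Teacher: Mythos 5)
Your proof is correct. Note that the paper deliberately omits a proof of this proposition, deferring to Morse--Cairns \cite[Theorem 6.3]{morse2014critical}, so there is no in-paper argument to compare against; your Sard/parametric-transversality argument (realizing the critical set as the $C^1$-manifold $M=\Phi^{-1}(0)$, projecting to the parameter space, and identifying degenerate critical points of $u^a$ with critical points of the projection) is the standard proof of this statement and is essentially the one in the cited reference, including the correct observation that the equidimensional case of Sard's theorem only needs $C^1$ regularity.
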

We will not give the proof of Proposition \ref{morseGen} since 
it can be found in \cite{morse2014critical}. But we will 
give the proof of Proposition \ref{morseAlmostAll} because 
\cite{morse2014critical} only states Proposition \ref{morseAlmostAll}
for $d = 3, m = 1$ and only sketches the proof. 
\begin{proof}[Proof of Proposition \ref{morseAlmostAll}]
We take $W = \{(p,a)\vert p,a \in \R^d, p \neq a, p \neq x_1,\dots, p \neq x_{n-1} \}$
and we define $$U:W \rightarrow \R$$
by $$U(p,a) = V_m^{(x_1,q_1),\dots,(x_{n-1},q_{n-1}), (a,q_n)}(p).$$
Consider a matrix $d \times d$ matrix $M$ given by  
$$M = 
\begin{bmatrix}
    U''_{p_1a_1} & U''_{p_1a_2} & \dots & U''_{p_1a_d} \\
    U''_{p_2a_1} & U''_{p_2a_2} & \dots & U''_{p_2a_d} \\
    \hdotsfor{4} \\
     U''_{p_da_1} & U''_{p_da_2} & \dots & U''_{p_da_d}
\end{bmatrix}
$$
and we are going to show that $\rank(M) = d$.
For $1 \le k \le d$ have 
$$U'_{a_k}(p) =  c(m)
\frac{q_n(p_k - a_{k})}
{\vert \vert p - a \vert \vert^{m+2}},$$
where $c(m) = m$, for $m \neq 0$ and $c(0) = -1$.
And thus for $1 \le j \neq k \le d$ we have 
$$U''_{a_kp_j}(p) =  -c(m)(m+2)
\frac{q_n(p_k - a_{k})(p_j - a_{j})}
{\vert \vert p - a \vert \vert^{m+4}},$$
and 
$$U''_{a_kp_k}(p) =  -c(m)(m+2)
\frac{q_n(p_k - a_{k})^2}
{\vert \vert p - a \vert \vert^{m+4}}  +c(m)
\frac{q_n}
{\vert \vert p - a \vert \vert^{m+2}}.$$
Thus 
\begin{equation} M = (c(m)
\frac{q_n}
{\vert \vert p - a \vert \vert^{m+2}})(I - (m+2)vv^T), \label{ImW}\end{equation}
where $$v = (p - a) / \vert \vert p - a \vert \vert.$$
Suppose that $\rank(M) < d$ then there exists $w \neq 0$ such that 
$$Mw = 0,$$
from (\ref{ImW}) we see that $w = \ll v$ for some $\ll \neq 0$.
But then 
$$Mw = M\ll v = (c(m)\frac{q_n}
{\vert \vert p - a \vert \vert^{m+2}})(\ll v - \ll v (m+2)) \neq 0.$$
So we have a contradiction and thus $\rank(M) = d$.
Now we can apply Proposition \ref{morseGen} and get that 
for almost all $a$ the potential $V_m^{(x_1,q_1),\dots,(a,q_n)}$
does not have degenerate critical points.
\end{proof}

\section{Critical points of $\SINR$}
\subsection{The statement}
In wireless communications the signal\hyp{}to\hyp{}interference\hyp{}plus\hyp{}noise ratio ($\SINR$)
is used as a way to measure the quality of wireless connection (see \cite{avin2009sinr, kantor2011topology}).  
Given $d \in \{2,3\}$, points $x_1,\dots,x_n \in \R^d$, 
$\psi_1,\dots,\psi_n > 0$, $\alpha > 0$ and $N \ge 0$ 
a function $\SINR(x_i, \cdot):\R^d \setminus \{x_1,\dots,x_n\} \rightarrow \R$ is defined by 
\begin{equation} \SINR(x_i, p) = \frac{\psi_i \vert  \vert x_i - p \vert \vert^{-\alpha} }
{\sum_{j\neq i}(\psi_j \vert  \vert x_j - p \vert \vert^{-\alpha}) + N}. \label{SINRdef}\end{equation}   
In this model a receiver at a point $p$ successfully receives  
a message from sender $x_i$, if and only if $\SINR(x_i, p) \ge \beta,$
where $\beta$ is a constant $\ge 1$. 
Numbers $\psi_1,\dots,\psi_n$ represent transmitting powers of concurrently transmitting stations
at points $x_1,\dots,x_n$. $N$ represents the environmental noise. 
The pass-loss parameter $\alpha$ is typically taken from the interval $[2,4]$, with $\aa = 2$ being the 
most common. The reception threshold $\bb$ is commonly taken to be $\bb \approx 6$.

Interest in the critical points of $\SINR$ is motivated by the point location problem, see \cite{kantor2011topology}.
Here is the description of the problem from \cite{kantor2011topology}:
"Given a query point $p$, it is required to identify which of the $n$ transmitting stations is heard at $p$,
if any, under interference from all other $n-1$ transmitting stations and background noise $N$.
Obviously, one can directly compute $\SINR(x_i, p)$ for every $i \in \{1,\dots, n\}$ in time $\Theta(n)$
and answer the above question accordingly. Yet, this computation may be too expensive,
if the query is asked for many different points $p$."

\subsection{Rough summary of our results for $\SINR$} \label{SINRSummary}
For the case when $\aa$ is an even integer the critical 
points of $\SINR(x_i, \cdot)$ coincide with solutions 
of a system of polynomial equations with $d$ independent 
variables. From the  
Thom-Milnor Theorem (see Corollary \ref{ThomMilnerCor}) we get that the number of isolated critical points
is bounded from above by 
$$(\aa (2n - 1)  - 1)(2\aa(2n - 1)  - 3)^{d-1},$$
 see Theorem \ref{SINRThm}(\ref{SINRThm-bound}).
 

\subsection{Our results for $\SINR$}
Now we give a more detailed version of the above statement. 

\begin{theorem}\label{SINRThm}
Let $d \in \{2,3\}$, let $x_1,\dots,x_{n}$ be points in $\R^d$,
$p  = (p_1,\dots,p_d) \in \R^d \setminus \{x_1,\dots, x_n\}$,
$\aa \in \{2,4,6,8\dots\}$, 
$\psi_1,\dots,\psi_n > 0$, $N \ge 0$,
$i \in \{1,\dots, n\}$
and $\SINR(x_i, \cdot)$ be a function defined as in (\ref{SINRdef}). Then,
\begin{enumerate}
\begin{item}\label{SINRThm-1}
Point $p$ is a critical point of $\SINR(x_i, \cdot)$ iff for every $m = {1,\dots,d}$ we have 
 
\begin{equation} f'_m(p) g(p) - f(p) g'_m(p) = 0,\label{SINR-EEE} \end{equation}
where $f,g$ are real polynomials in variables $(\bar p_1,\dots,\bar p_d) := \bar p$ given by 
$$f = \psi_i \prod_{j \neq i}\vert \vert x_i - \bar p \vert \vert^\aa,$$ 
$$g = \sum_{j \neq i}(\psi_j \prod_{k \neq j}\vert \vert x_k - \bar p \vert \vert^\aa)
+ N \prod_{1 \le k \le n}\vert \vert x_k - \bar p \vert \vert^\aa.$$ 
\end{item}
\begin{item}
$\SINR(x_i, \cdot)$ has at most 
$$(\aa (2n - 1)  - 1)(2\aa(2n - 1)  - 3)^{d-1}$$
 isolated critical points.
\label{SINRThm-bound}
\end{item}
\end{enumerate}
\end{theorem}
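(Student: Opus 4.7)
The plan is to reduce the vanishing of $\nabla \SINR(x_i, \cdot)$ to a polynomial system via the quotient rule, and then apply Corollary~\ref{ThomMilnerCor} (Thom--Milnor) to the resulting system in $d$ variables.

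For part (\ref{SINRThm-1}), I would rewrite $\SINR(x_i, p) = f(p)/g(p)$. Multiplying the numerator and denominator of (\ref{SINRdef}) by $\prod_{k=1}^{n} \|x_k - p\|^{\alpha}$ turns the ratio into a ratio of polynomials precisely because $\alpha$ is an even positive integer, so that each factor $\|x_k - \bar p\|^{\alpha} = \bigl(\sum_{\ell} (\bar p_{\ell} - x_{k\ell})^{2}\bigr)^{\alpha/2}$ is a genuine polynomial of degree $\alpha$ in $\bar p$. The numerator becomes the $f$ of the statement and the denominator becomes $g$. Since $p \notin \{x_1,\dots,x_n\}$, all $\psi_j>0$ and $N\ge 0$, one has $g(p) > 0$, and the quotient rule
$$\frac{\partial}{\partial p_m}(f/g) = \frac{f'_m\, g - f\, g'_m}{g^{2}}$$
gives $\nabla \SINR(x_i, p) = 0$ if and only if (\ref{SINR-EEE}) holds for every $m=1,\dots,d$. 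Because $g$ is nonzero on the relevant domain, this equivalence introduces no spurious solutions.

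For part (\ref{SINRThm-bound}), the proof is then a degree count followed by a direct invocation of the Thom--Milnor estimate. We have $\deg f = (n-1)\alpha$ and $\deg g = n\alpha$, so each polynomial $f'_m g - f\, g'_m$ has degree at most $((n-1)\alpha - 1) + n\alpha = \alpha(2n-1) - 1$. Setting $k = \alpha(2n-1) - 1$ and applying Corollary~\ref{ThomMilnerCor} to the system (\ref{SINR-EEE}) in the $d$ variables $p_1,\dots,p_d$ yields the bound
$$k(2k-1)^{d-1} = (\alpha(2n-1) - 1)(2\alpha(2n-1) - 3)^{d-1},$$
which is exactly the claimed estimate. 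Isolated critical points of $\SINR(x_i,\cdot)$ in $\R^d \setminus \{x_1,\dots,x_n\}$ correspond to isolated zeroes of the polynomial system, so the bound transfers.

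There is no real obstacle here beyond the bookkeeping of degrees and the verification that $g > 0$ on the domain of $\SINR(x_i,\cdot)$; once the problem is polynomialised, the Thom--Milnor theorem does all the work. The only substantive restriction is the hypothesis that $\alpha$ be an even integer, which is what makes the clearing of denominators yield genuine polynomials rather than quasi-polynomials; for non-integer $\alpha$ one would have to pass to an auxiliary system with extra variables (as in the proof of Theorem~\ref{evenPoly}(\ref{evenPoly-p4})), trading the polynomial bound for an exponential one.
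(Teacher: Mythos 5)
Your proposal is correct and follows essentially the same route as the paper: write $\SINR(x_i,\cdot)=f/g$ with $g>0$ on the domain, apply the quotient rule, and then bound the degrees of $f'_m g - f g'_m$ by $\alpha(2n-1)-1$ before invoking Corollary~\ref{ThomMilnerCor}. Your version is slightly more explicit about the degree bookkeeping (and implicitly corrects the typo $x_i$ for $x_j$ in the displayed formula for $f$), but the argument is the same.
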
 
\begin{proof}[Proof of Theorem \ref{SINRThm}(\ref{SINRThm-1})]
Note that 
$$\SINR(x_i, \bar p) = \frac{f(\bar p)}{g(\bar p)}.$$
Thus, 
$$\SINR'_m(x_i, p) = \frac{f'_m(p) g(p) - f(p) g'_m(p)}{g^2(p)}$$
and the claim follows. (Note that since $\psi_1,\dots,\psi_n > 0$ the denominator 
is always positive and thus can not create any problems.)
\end{proof}
\begin{proof}[Proof of Theorem \ref{SINRThm}(\ref{SINRThm-bound})]
Since $\aa$ is even natural number, (\ref{SINR-EEE}) is a polynomial system in $d$
variables: $p_1,\dots,p_d$. Each polynomial of the system 
has degree $\le (\aa (2n - 1)  - 1)$. Thus, by Thom-Milnor 
theorem (see Corollary \ref{ThomMilnerCor}) we have the number of 
isolated critical points of $V$ does not exceed
$$(\aa (2n - 1)  - 1)(2\aa(2n - 1)  - 3)^{d-1}.$$ 
\end{proof}

\subsection{Existence of non-isolated critical points}
It is unclear whether non-isolated critical points of $\SINR(x_i, \cdot)$
 could exist in the general case. 
But for almost all of the selections of locations of transmitters
there are no non-isolated critical points.
The following proposition has the same nature as Proposition \ref{morseAlmostAll} 
but the proof is a bit more messy. 

\begin{proposition} \label{SINRAlmostAll}
Let $d \in \{2,3\}$, $n \in \{d+2, d+3, d+4,\dots\}$, $i \in \{1,\dots, n\}$,
let $x_i$ be a point in $\R^d$, 
$\aa \in \{2,4,6,8,\dots\}$, 
$\psi_1,\dots,\psi_n > 0$, $N \ge 0$.
Then, for almost all 
$(x_1 = (x_{11},\dots,x_{1d}),\dots,x_{i-1} = (x_{(i-1)1},\dots,x_{(i-1)d}),
x_{i+1} = (x_{(i+1)1},\dots,x_{(i+1)d}),
\dots,x_{n}= (x_{n1},\dots,x_{nd})) \in \R^{d(n-1)}$
the function  
$\SINR(x_i, \cdot)$ has no degenerate critical points.  
\end{proposition}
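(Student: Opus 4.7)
The plan is to mirror the proof of Proposition \ref{morseAlmostAll} by applying Proposition \ref{morseGen} with the transmitter positions $x_j$ ($j\ne i$) playing the role of the auxiliary parameters $a\in\R^{d(n-1)}$. I would take $W\subset\R^{d+d(n-1)}$ to be the open set of pairs $(p,a)$ for which $p$ is distinct from $x_i$ and from every transmitter recorded in $a$, and for which the $n-1$ transmitters encoded by $a$ are not all collinear in $\R^d$. This last open condition has full measure in $\R^{d(n-1)}$ because $d\ge 2$ and $n-1\ge 3$, so the conclusion of Proposition \ref{morseGen} applied on $W$ will yield Proposition \ref{SINRAlmostAll} on all of $\R^{d(n-1)}$ once its rank hypothesis is checked.

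To verify the rank hypothesis, it suffices to show that at every $(p,a)\in W$ with $\nabla_p U=0$, the $d\times d(n-1)$ block of mixed partials $[U''_{p\,x_\ell}]_{\ell\ne i}$ already has rank $d$. Writing $U=f/g$ with $f,g$ as in Theorem \ref{SINRThm}(\ref{SINRThm-1}), the critical-point identity $\nabla_p U=0$ forces $U''_{p_m\,x_{\ell k}}=U\cdot[(\log f)''_{p_m\,x_{\ell k}}-(\log g)''_{p_m\,x_{\ell k}}]$, and since $U>0$ on $W$ the rank question reduces to the logarithmic mixed partials.

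A direct calculation gives $[(\log f)''_{p\,x_\ell}]=(\aa/r_\ell^2)(-I+2\hat s_\ell\hat s_\ell^T)$ with $s_\ell:=p-x_\ell$, $r_\ell:=\|s_\ell\|$, $\hat s_\ell:=s_\ell/r_\ell$. For $\log g$ I would introduce the weights $\tau_k:=\psi_k\prod_{j\ne k}\|x_j-p\|^\aa/g$ for $k\ne i$, rewrite $\nabla_p\log g=\aa(p-x_i)/\|p-x_i\|^2+\aa\sum_{k\ne i}(1-\tau_k)(p-x_k)/\|p-x_k\|^2$, differentiate once more in $x_\ell$, and invoke the critical-point identity $\sum_{k\ne i}\tau_k(p-x_k)/\|p-x_k\|^2=(p-x_i)/\|p-x_i\|^2$ (obtained by equating $\nabla_p\log f=\nabla_p\log g$) to collapse the resulting sum. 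After simplification each $d\times d$ block assumes the rank-one-perturbation form
$$
M^\ell:=[U''_{p\,x_\ell}]=\frac{\aa\tau_\ell U}{r_\ell^2}\bigl[-I+u^\ell\hat s_\ell^T\bigr],\qquad u^\ell=(2+\aa)\hat s_\ell-\frac{\aa r_\ell}{\|p-x_i\|^2}(p-x_i).
$$

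The main obstacle is to show that the joint matrix $[M^\ell]_{\ell\ne i}$ has rank $d$ at every critical point in $W$. Suppose for contradiction that some nonzero $w\in\R^d$ satisfies $w^T M^\ell=0$ for every $\ell\ne i$; the explicit form above forces $w=(w\cdot u^\ell)\hat s_\ell$, so $w$ is parallel to $\hat s_\ell$ for every $\ell\ne i$. Consequently all transmitters $x_\ell$ ($\ell\ne i$) lie on the single line through $p$ with direction $w$, contradicting the collinearity-exclusion built into $W$. Hence the rank hypothesis of Proposition \ref{morseGen} holds on $W$, and Proposition \ref{SINRAlmostAll} follows.
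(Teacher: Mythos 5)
Your proposal is correct and follows essentially the same route as the paper: apply Proposition \ref{morseGen} with the transmitter positions as parameters on the open set $W$ excluding collinear transmitter configurations, show each mixed-partial block is a nonzero multiple of $-I$ plus a rank-one matrix with right factor $(p-x_\ell)^T$, and conclude that a common left null vector would force all $x_\ell$ onto a line through $p$. The only difference is computational: the paper first passes to $1/\SINR(x_i,\cdot)$ (noting that critical points and their degeneracy are preserved under reciprocals of positive functions), which makes the mixed partials in $x_\ell$ hit only one term and avoids your logarithmic-derivative manipulation and the critical-point identity.
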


\begin{proof}[Proof of Proposition \ref{SINRAlmostAll}]
Instead of working of working with $\SINR(x_i, \cdot)$ we prefer to work 
with $$\frac{1}{\SINR(x_i, \cdot)}  = 
\frac{\sum_{j\neq i}(\psi_j \vert  \vert x_j - p \vert \vert^{-\alpha}) + N}
{\psi_i \vert  \vert x_i - p \vert \vert^{-\alpha} }.$$ By computing the gradient and the 
Hessian one can see that for any open $\Omega \subset \R^d$ and any
 $C^2$-function $f:\Omega \rightarrow (0,\infty)$
\begin{enumerate}
\begin{item}
a point $p \in \Omega$ is a critical point of $f$ iff $p$ is a critical point of $1/f$, 
\end{item}
\begin{item}
a point $p \in \Omega$ is a degenerate critical point of $f$ iff $p$ is a degenerate critical point of $1/f$. 
\end{item}
\end{enumerate} 
Thus, it suffices to show that
for almost all 
$(x_1 ,\dots,x_{i-1},
x_{i+1} ,
\dots,x_{n})$
a function $\frac{1}{ \SINR(x_i, \cdot) }$ has no degenerate critical points.

We take 
$$W = \{(p,(x_1,\dots,x_{i-1},x_{i+1},\dots,x_d)) \vert $$
$$\vert p,x_1,\dots,x_{i-1},x_{i+1},\dots,x_d \text{ are distinct points in } \R^d \setminus \{x_i\},$$
$$\text{ and } x_1,\dots,x_{i-1},x_{i+1},\dots,x_d \text{ do not lie on all on the same line}  \}$$
and we define $$U:W \rightarrow \R$$
by $$U(p,(x_1,\dots,x_{i-1},x_{i+1},\dots,x_{n})) = 
\frac{1}{\SINR(x_i, p)}.$$
For $h \in \{1,\dots,i-1,i+1,\dots,n\}$ consider a matrix $d \times d$ matrix $M_{h}$ 
given by  
$$M_h = 
\begin{bmatrix}
    U''_{p_1x_{h1}} & U''_{p_1x_{h2}} & \dots & U''_{p_1x_{hd}} \\
    U''_{p_2x_{h1}} & U''_{p_2x_{h2}} & \dots & U''_{p_2x_{hd}} \\
    \hdotsfor{4} \\
     U''_{p_dx_{h1}} & U''_{p_dx_{h2}} & \dots & U''_{p_dx_{hd}}
\end{bmatrix}
$$
By Proposition \ref{morseGen} it suffices to show that the matrix 
$$[M_1,\dots,M_{i-1},M_{i+1},\dots,M_{n}]$$
has rank $d$ everywhere in $W$. 
Suppose that's not true. Then for some $$(p,(x_1,\dots,x_{i-1},x_{i+1},\dots,x_{n})) \in W$$
there exists $\ww \in \R^{d} \setminus \{0\}$ such that 
\begin{equation}w^TM_{h} = 0,\label{wMeq0}\end{equation}
for every $h \in {1,\dots,i-1,i+1,\dots,n}$.

Next, we need to explain what happens to a row vector when it gets multiplied by $M_{h}$ from the right.
For every $h \in    \{1,\dots,i-1,i+1,\dots,n\}$ and $k \in \{1,\dots,d\}$
$$U'_{x_{hk}}(p) =  \aa\frac{ \psi_h \vert  \vert x_h - p \vert \vert^{-(\alpha + 2)} }
{\psi_i \vert  \vert x_i - p \vert \vert^{-\alpha} }(p_k - x_{hk}).$$
We denote 
$$Q_h := \aa\frac{ \psi_h \vert  \vert x_h - p \vert \vert^{-(\alpha + 2)} }
{\psi_i \vert  \vert x_i - p \vert \vert^{-\alpha} },$$
so we have 
$$U'_{x_{hk}}(p) =   (p_k - x_{hk})Q_h.$$
For every $h \in \{1,\dots,i-1,i+1,\dots,n\}$ and $k,j \in \{1,\dots,d\}$ we have 
$$U''_{x_{hk}p_j}(p) = P^{h,k,j}_1 + P^{h,k,j}_2 + P^{h,k,j}_3,$$
where 
$$P^{h,k,j}_1 = \begin{cases} 0 & \text{if } j \neq k \\ Q_h  & \text{if } j = k \end{cases},$$ 
$$P^{h,k,j}_2 =  -(\aa + 2)(p_k - x_{hk})(p_j - x_{hj})\vert \vert x_h - p \vert \vert^{-2}Q_h,$$
$$P^{h,k,j}_3 =   \aa(p_k - x_{hk})(p_j - x_{ij})\vert \vert x_i - p \vert \vert^{-2} Q_h.$$

Thus $M_h$ can be presented as 
$$M_h = M_{h1} + M_{h2} + M_{h3},$$
such that for every $v = (v_1,\dots,v_d)^T \in \R^{d}$ we have 
$$v^T M_{h1} = v^TQ_h,$$
$$v^T M_{h2} = -(\aa + 2)\vert \vert x_h - p \vert \vert^{-2}v^T(p-x_h)(p-x_h)^TQ_h,$$
$$v^T M_{h3} =   \aa  \vert \vert x_i - p \vert \vert^{-2}v^T(p-x_i)(p-x_h)^TQ_h.$$
Note that for every $v \in \R^{d}$ we have that 
$$v^T M_{h2} = \ll_2(v) (p-x_h)^T,$$
$$v^T M_{h3} = \ll_3(v) (p-x_h)^T,$$
for some $\ll_2(v), \ll_3(v) \in \R$. Thus from $\ww^TM_{h} = 0$ we have that 
$$\ww^T M_{h1} = -(\ll_2(w) + \ll_3(w)) (p-x_h)^T,$$
and we can conclude that for every $h \in {1,\dots,i-1,i+1,\dots,n}$ 
there exists $c(h) \in \R \setminus \{0\}$ such that 
$$\ww = c(h) (p-x_h).$$
This implies that all $x_1,\dots,x_{i-1},x_{i+1},\dots,x_d$ all lie on the same line. 
But that's the opposite of what is stated in the definition of $W$.
\end{proof} 

\section{Newtonian Point Masses with a Central Force}
This model is similar to Maxwell's problem for the case $m = 1$ 
with all the charges having the same sign except there is an 
additional quadratic term. More precisely,   
let $d \in \{1,2,3,\dots\}$.  
Suppose that we have points $x_1,\dots,x_n \in \R^d$.
And numbers $m_1,\dots,m_n  > 0$ 
which symbolize point masses located in those points. 
Consider a function 
$F = F^{(x_1,m_1),\dots,(x_n,m_n)}:\R^{d} \rightarrow \R$ given by
\begin{equation}F(p) = \frac{1}{2}||p||^2 + \sum_{i = 1}^{n}\frac{m_i}{\vert \vert p - x_i \vert \vert}.
\label{NewtonPotential}\end{equation}
  Once again we are interested in  an upper bound on the number 
  of critical points of $F$. 

This model arises in the study of the restricted $(n + 1)$-body
problem. We refer the reader to \cite{arustamyan2021number} 
for details and the relevant bibliography. 
 
\subsection{Rough summary of our results for Newtonian Point Masses with a Central Force} \label{NewtonSummary}
We give an exponential upper bound for the number of isolated 
critical points for any dimension $d$. More precisely the number 
of isolated critical points of $F$ does not exceed 
$$(1 + 3n)(1 + 6n)^{d + n}.$$

\subsection{Previous results}
Arustamyan et al. \cite{arustamyan2021number} claim an exponential upper bound 
for the number of isolated 
critical points for the planar case of this problem. 
  
In astrophysics, there is a related open problem 
of finding upper bounds on the 
number of possible images in gravitational microlensing, 
see \cite{khavinson2008fromthefundamental, petters2010gravity}.
S. Perry \cite{perry2021upper} gives the only
 known upper bound for the general case. 
 
\subsection{Our results for Newtonian Point Masses with a Central Force}
The following theorem formalizes the informal summary given in Subsection \ref{NewtonSummary}. 

\begin{theorem}\label{Newton}
Let $d \in \{1,2,3,\dots\}$.
Let $x_1 = (x_{11},\dots,x_{1d}),\dots,x_{n} = (x_{n1},\dots,x_{nd})$ 
be points in $\R^d$,
$p  = (p_1,\dots,p_d) \in \R^d \setminus \{x_1,\dots, x_n\}$,
$m_1,\dots,m_n \in \R \setminus \{0\}$
and let $F = F^{(x_1,m_1),\dots,(x_n,m_n)}$ be the function given by (\ref{NewtonPotential}). Then,
\begin{enumerate}
\begin{item}
Point $p$ is a critical point of $F$ iff there exist $\ss_1,\dots,\ss_n > 0$
satisfying
$$
\ss_j^2  \sum_{1\le k \le d}(p_k - x_{jk})^2 = 1 
\text{, for every $1\le j \le n$, and}
$$
\begin{equation}
p
 -
\sum_{i=1}^{n}
\Big(
m_i(p - x_{i})
\ss_i^{3}
\Big) = 0.
\label{Newton-EEE}\end{equation}\label{Newton-p1}
\end{item}
\begin{item}
$F$ has at most 
$$4 \cdot 7^{d + n}$$
 isolated critical points.
\label{Newton-p2}
\end{item}
\end{enumerate}
\end{theorem}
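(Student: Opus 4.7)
The plan is to follow the same two-step recipe that handled Theorem \ref{evenPoly}(\ref{evenPoly-p3}) and (\ref{evenPoly-p4}): first polynomialize the critical-point equation by introducing one auxiliary variable per mass, then apply the Thom-Milnor bound from Corollary \ref{ThomMilnerCor} to the resulting polynomial system.

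For part (\ref{Newton-p1}), I would compute the gradient of $F$ directly. The quadratic term $\frac{1}{2}||p||^2$ contributes $p$, while the Newtonian sum contributes $-\sum_{i=1}^n m_i(p - x_i)/||p - x_i||^3$, so
$$\nabla F(p) = p - \sum_{i=1}^n \frac{m_i(p - x_i)}{||p - x_i||^3}.$$
The obstruction to this being polynomial is the denominators $||p - x_i||^3$. To remove them, I introduce one positive auxiliary variable $\ss_j > 0$ for each $j$, constrained by $\ss_j^2 \sum_{k=1}^d (p_k - x_{jk})^2 = 1$, which forces $\ss_j = 1/||p - x_j||$. Substituting $\ss_i^3$ for $1/||p - x_i||^3$ then yields the second half of (\ref{Newton-EEE}), and the equivalence with $\nabla F(p) = 0$ is immediate.

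For part (\ref{Newton-p2}), I would argue by contradiction, just as in the proof of Theorem \ref{evenPoly}(\ref{evenPoly-p4}). The system of part (\ref{Newton-p1}) is polynomial in the $d+n$ variables $p_1, \dots, p_d, \ss_1, \dots, \ss_n$. Each of the $n$ constraint equations $\ss_j^2 \sum_k(p_k - x_{jk})^2 - 1$ has degree $4$, and each of the $d$ coordinate equations $p_k - \sum_i m_i(p_k - x_{ik})\ss_i^3$ also has degree $4$ (the extra linear summand $p_k$ coming from the central-force part of $F$ does not raise the maximum). Hence the maximum degree is $k = 4$, and Corollary \ref{ThomMilnerCor} bounds the number of connected components of the zero set, and a fortiori the number of isolated zeros, by $4 \cdot 7^{d+n-1}$, which is even sharper than the stated $4 \cdot 7^{d+n}$.

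I do not anticipate any genuine obstacle: the polynomialization trick is identical to the one already used for Maxwell's problem, and the degree bookkeeping is straightforward. The only item worth double-checking is that distinct isolated critical points of $F$ produce distinct isolated zeros of the enlarged system in $\R^{d+n}$, which is immediate because on $\R^d \setminus \{x_1, \dots, x_n\}$ the auxiliary coordinates $\ss_j = 1/||p - x_j||$ are uniquely and smoothly determined by $p$, so isolation is preserved under the lift.
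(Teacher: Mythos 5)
Your proposal is correct and follows essentially the same route as the paper: compute $\nabla F$, polynomialize via the auxiliary variables $\ss_j$ with $\ss_j^2\|p-x_j\|^2=1$, and apply Corollary \ref{ThomMilnerCor} to the degree-$4$ system in $d+n$ variables. Your observation that Thom--Milnor actually yields the sharper exponent $d+n-1$ is accurate (the paper states the weaker $4\cdot 7^{d+n}$), and your remark on isolation being preserved under the lift addresses a point the paper leaves implicit.
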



\begin{proof}[Proof of Theorem \ref{Newton}(\ref{Newton-p1})]
We remind that  
$$F(p) = \frac{1}{2}||p||^2 + \sum_{i = 1}^{n}\frac{m_i}{\vert \vert p - x_i \vert \vert}.$$
We differentiate $V$:
$$F'_{p_k}  =  p_k - \sum_{i = 1}^{n}
\frac{m_i(p_k - x_{ik})}
{\vert \vert p - x_i \vert \vert^{3}},$$ 
$$\nabla F =  p - \sum_{i = 1}^{n}
\frac{m_i(p - x_{i})}
{\vert \vert p - x_i \vert \vert^{3}},$$
Thus, $\nabla F(p) = 0$ is equivalent to  
(\ref{Newton-EEE}). 
\end{proof}

\begin{proof}[Proof of Theorem \ref{Newton}(\ref{Newton-p2})]
Denote 
 $$N = 4 \cdot 7^{d + n}.$$
We will argue by contradiction. 
Suppose that $F$ has at least $N + 1$ isolated critical points.
Then by Theorem \ref{Newton}(\ref{Newton-p1})
the zero set of the system (\ref{Newton-EEE}) (considered as a polynomial 
system in $d + n$ variables $p_1,\dots,p_k,\ss_1,\dots,\ss_n$)
has at least $N + 1$ connected components. 

The degree of every polynomial in this system is $\le 4$.
Thus,
by Thom-Milnor theorem (see Corollary \ref{ThomMilnerCor}) we 
have that the total number of the connected components of the zero set 
does not exceed 
$$4 \cdot 7^{d + n} = N,$$
so we have a contradiction.   
\end{proof}

\subsection{Existence of non-isolated critical points} \label{NewtonNonIso}
There is only one known example of a configuration with non-isolated
critical points: we take $n = 1$ and put the only one point mass 
we have into the origin.    

It is also known that for almost all configurations 
there is no degenerate critical points, see \cite[Theorem 2.1]{arustamyan2021number}
(authors of \cite{arustamyan2021number} state their result only for $d = 2$
but their proof works for every $d$).

\section{Central configurations} \label{SecCentral}
Let $d,n \in \{1,2,3,\dots\}$, $m_1, \dots, m_n > 0$ and 
let $x_1,\dots,x_n$ be distinct points in $\R^d$.
We say that a system of point masses $(x_1, m_1),\dots,(x_n, m_n)$ 
is a central preconfiguration if there exists $\ll > 0$   
such that for every $i = 1,\dots,n$,
$$\ll x_i = \sum_{\substack {  1 \le j \le n \\ j \neq i} }(m_i\vert\vert x_i - x_j\vert \vert^{-3}(x_i - x_j)).$$
We say that a central preconfiguration is normalized if $\ll$ in the above system 
is equal to $1$. (Every central preconfiguration can be normalized by scaling.) 

We say that normalized central preconfigurations 
$(x_1, m_1),\dots,(x_n, m_n)$ and $(x'_1, m_1),\dots,(x'_n, m_n)$ 
are equivalent if there
exists an orientation and origin-preserving isometry
$$F:\R^d \rightarrow \R^d$$ such that 
$F(x_i) = x'_i$ for every $i = 1,\dots,n$. 
Central configuration is an equivalence class of normalized central preconfigurations.


We say that a c 
central configurations $K$ is isolated 
iff there exists $\ee > 0$ such that for every normalized central preconfiguration 
$(x_1, m_1),\dots,(x_n, m_n)$ from $K$ and every normalized central preconfiguration 
$(y_1, m_1),\dots,(y_n, m_n)$ which does not belong to $K$ we have 
$$\max_{1 \le i \le n} \vert \vert x_i - y_i \vert \vert \ge \ee.$$
 


\subsection{Our results for central configurations}
The following theorem is very similar to the result of R. Kuzmuna \cite[Theorem 3]{kuzmina1977upper}.
R. Moekel's estimates \cite{moeckel1985relative, moeckel2001generic} for the number of central 
configurations also utilize Thom-Milnor theorem.   

\begin{theorem}\label{central}
Let $d,n \in \{1,2,3,\dots\}$, $m_1, \dots, m_n > 0$
There are at most 
$$4 \cdot 7^{\frac{n(n-1)}{2} + nd - 1}$$
isolated central configurations with 
those parameters.
\end{theorem}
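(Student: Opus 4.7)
My plan is to follow the strategy used in Theorem~\ref{evenPoly}(\ref{evenPoly-p4}) and in Theorem~\ref{Newton}(\ref{Newton-p2}): reformulate the defining condition as a polynomial system of bounded degree in a suitable number of variables, and then invoke Corollary~\ref{ThomMilnerCor}. Concretely, I will introduce, for each unordered pair $\{i,j\}$ with $1 \le i < j \le n$, an auxiliary real variable $\sigma_{ij}$ meant to represent $\|x_i - x_j\|^{-1}$ (and set $\sigma_{ji} := \sigma_{ij}$), and consider the polynomial system in the $nd + \frac{n(n-1)}{2}$ real variables $(p_{ik})_{1 \le i \le n,\,1 \le k \le d}$ and $(\sigma_{ij})_{1 \le i < j \le n}$ consisting of the distance relations
$$\sigma_{ij}^{2}\sum_{k=1}^{d}(p_{ik} - p_{jk})^{2} - 1 = 0, \qquad 1 \le i < j \le n,$$
together with the componentwise normalized central preconfiguration equations in which each $\|x_i - x_j\|^{-3}$ is replaced by $\sigma_{ij}^{3}$. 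A direct inspection shows that every polynomial in this system has degree at most $4$.

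Next, I will argue that each isolated central configuration yields a distinct connected component of the real zero set of this system. A normalized central preconfiguration lifts uniquely to a real solution by setting $\sigma_{ij} = \|x_i - x_j\|^{-1} > 0$, and the full orbit of a central configuration $K$ under the $\mathrm{SO}(d)$-action on $\R^{nd}$ (leaving the $\sigma_{ij}$ fixed) is a connected compact subset of the zero set, because $\mathrm{SO}(d)$ is connected and compact. Because $\sigma_{ij}^{2}\|p_i-p_j\|^2 = 1$ forces $\sigma_{ij} \neq 0$, the sign of each $\sigma_{ij}$ is constant on each connected component of the zero set; in particular, the locus where all $\sigma_{ij}$ are positive, which is exactly the set of preconfiguration lifts, is a union of connected components. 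If $K$ is isolated with separation $\ee$, then the $\ee$-tube around the lifted orbit in the enlarged space (in the $p$-coordinates) contains no other preconfiguration lifts, so the lifted orbit is both open and closed in the zero set. Different isolated configurations give disjoint orbits, hence disjoint components.

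Applying Corollary~\ref{ThomMilnerCor} with $k = 4$ and $m = nd + \frac{n(n-1)}{2}$ then bounds the total number of connected components of the real zero set by
$$4 \cdot 7^{nd + \frac{n(n-1)}{2} - 1},$$
which is exactly the claimed bound.

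The step I expect to require the most care is the middle one: verifying that the lift of an isolated $\mathrm{SO}(d)$-orbit really is a connected component of the full zero set and is not topologically connected to ``spurious'' real solutions of the enlarged polynomial system (solutions with some $\sigma_{ij} < 0$, or solutions not arising as lifts of normalized preconfigurations). The sign-constancy of each $\sigma_{ij}$ on connected components, together with the separation condition in the definition of isolation, are the two ingredients that jointly do this work; everything else reduces to bookkeeping of degrees and variable counts.
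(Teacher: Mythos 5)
Your proposal is correct and follows essentially the same route as the paper: the same auxiliary variables $\ss_{ij}$, the same degree-$4$ polynomial system in $\frac{n(n-1)}{2}+nd$ variables, and the same application of Corollary \ref{ThomMilnerCor}. In fact you supply more detail than the paper does on the one delicate step (that each isolated central configuration, i.e.\ each isolated $\mathrm{SO}(d)$-orbit, contributes a distinct connected component of the zero set via sign-constancy of the $\ss_{ij}$ and the $\ee$-separation), which the paper simply asserts.
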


\begin{proof}
Let $x_1 = (x_{11},\dots,x_{1d}),\dots,x_n = (x_{n1},\dots,x_{nd}) \in \R^d$.
A system of point masses $(x_1, m_1),\dots,(x_n, m_n)$ is 
a normalized central preconfiguration iff  there exist
$\{\ss_{ij}\}_{1 \le i \neq j \le n}$
such that $\ss_{ij} = \ss_{ji} > 0$ and 
satisfying 
$$
\ss_{ij}^2 \sum_{1\le k \le d}(x_{ik} - x_{jk})^2 = 1 
\text{, for every $1\le i < j \le n$ and}
$$
\begin{equation}x_i = \sum_{\substack {  1 \le j \le n \\ j \neq i} }(m_i\ss_{ij}^{3}(x_i - x_j)).
\label{Central-EEE}\end{equation}

Denote 
 $$N = 4 \cdot 7^{\frac{n(n-1)}{2} + nd - 1}.$$
We will argue by contradiction. 
Suppose that there are at least $N + 1$ isolated  central configurations.
Then  
the zero set of the system (\ref{Central-EEE}) (considered as a polynomial 
system in $\frac{n(n-1)}{2} + nd$ variables 
$\{\ss_{ij}\}_{1 \le i < j \le n}$, 
$\{x_{ij}\}_{1 \le i \le n, 1 \le j \le d}$
)
has at least $N + 1$ connected components. 
The degree of every polynomial in this system is $\le 4$.
Thus,
by Thom-Milnor theorem (see Corollary \ref{ThomMilnerCor}) we 
have that the total number of the connected components of the zero set 
does not exceed 
$$4 \cdot 7^{\frac{n(n-1)}{2} + nd - 1} = N,$$
so we have a contradiction.   
\end{proof}
 
%

\subsection{Existence of non-isolated central configurations}
It is unknown if non-isolated central configurations exist. This is a high-profile 
open problem known as 6th Smale's problem, see \cite{smale1998mathematical}.

M. Hampton and R. Moeckel \cite{hampton2006finiteness} showed that 
all central configurations are isolated for $n = 4$.
A. Albouy and V. Kaloshin \cite{albouy2012finiteness} proved the same 
for $d = 2$ and $n = 5$ for almost all sets of masses. 

%
For more information on the field we
refer the interested reader to the introduction in \cite{albouy2012finiteness}.

\subsection*{Acknowledgements}
The author is thankful to Andrei Alpeev and Pasha Galashin for discussing 
 early versions of the paper and pointing out a serious bug. I am grateful to
 Mathoverflow users Wille Liou, Gro-Tsen, and user43326 for helping me with 
 my questions on algebraic geometry. 
 The author is thankful to Alain Albouy for pointing me to the works of Kuzmina and Moeckel.
\bibliography{circle}

\bibliographystyle{plain}

\end{document}